\documentclass{cccg12}

\usepackage{graphicx,amssymb,amsmath,dsfont}

\usepackage[small,bf,hang]{caption}
\setlength{\captionmargin}{12pt}

\usepackage{url}

\usepackage{verbatim}

\title{Computing the Coverage of an Opaque Forest \thanks{This work was supported by NSERC}}
\author{Alexis Beingessner \and Michiel Smid \thanks{School of Computer Science, Carleton University}}

\begin{document}
\thispagestyle{empty}
\maketitle


\begin{abstract}
We consider the problem of taking an opaque forest and determining the regions that are covered by it. We provide a tight upper bound on the complexity of this problem, and an algorithm for computing this area, which is worst-case optimal.
\end{abstract}


\section{Introduction}
Let a \emph{region} be any bounded, closed, and connected set of points in $\mathds{R}^2$. Then a \emph{barrier}, or \emph{opaque forest}, of a finite set $R$ of regions, is any finite set $B$ of closed and bounded line segments, such that for any line $\ell$: if $\ell$ intersects $R$ then $\ell$ also intersects $B$. A previously studied problem is as follows: given some set $R$ or regions, compute a barrier $B$ such that the length of all the segments in $B$, $|B|$, is minimal. The exact solution to this problem is not known even for specific cases, such as when $R$ is a unit square. The best known bounds for this instance of the problem are $2 \leq |B| \leq \sqrt{2} + \sqrt{6}/2$. \cite{Dumitrescu2010}

The general problem of computing a minimal barrier for a given set of regions is a very difficult one. Currently there are no proven algorithms for computing this precisely, nor even known solutions for specific cases. For the internally optimal barrier, there is also no known algorithm. However, by further restricting the problem, it is reducible to well studied problems. If the internally optimal barrier is restricted to a single connected component, then this is easily reducible to the \emph{Minimal Steiner Tree Problem}. If the barrier is further restricted to a single polygonal chain, then the problem is reducible to the \emph{Travelling Salesman Problem}. Both of these problems are known to be NP-Hard in general, but can be much more easily computed or approximated when the input points are in convex position, which is the case for this problem \cite{Dumitrescu2010}.

In this paper we consider the following problem: given some barrier $B$, compute a maximal set $R$ of regions such that $B$ is a barrier for $R$. More precisely, given a set $B$ of $n$ line segments, compute $R(B) = \{p \in \mathds{R}^2:$ every line through $p$ intersects $B\}$. We say that $R(B)$ is the \emph{coverage} of $B$.

We give an algorithm that computes the coverage of an opaque forest in
$O(n^4)$ time. We also provide an example of an opaque forest whose
coverage has size $\Omega(n^4)$. Thus, our algorithm is worst-case
optimal.

\section{Maximal Regions}

 Let a \emph{maximal region} of a set $P$ of points be a region
   $R$ such that for every point $p$ in $R$, there exists an open
   ball $A$ centered at $p$ such that $A \cap R = A \cap P$.

\begin{lemma}
\label{lemma:hasNoLines}
If a maximal region of $R(B)$ is a line segment, then that line segment is part of $B$.
\end{lemma}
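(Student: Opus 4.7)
The plan is proof by contradiction. Suppose the maximal region $s$ is a segment not contained in $B$, and pick a point $p$ in the relative interior of $s$ with $p \notin B$. Since $B$ is closed and $s$ is maximal, there is a small open ball $A$ around $p$ with $A \cap B = \emptyset$ and $A \cap R(B) = A \cap s$. For every unit vector $w$ not parallel to $s$ and every sufficiently small $\delta > 0$, the point $p + \delta w$ lies in $A \setminus s$, hence outside $R(B)$, so some line through $p + \delta w$ misses $B$. Letting $\delta \to 0^+$ along a subsequence, these lines converge to a line $\ell^*(w)$ through $p$; since ``$\ell$ misses $B$'' is open in line space (because $B$ is compact), $\ell^*(w)$ is \emph{critical}, meeting $B$ only at endpoints of segments whose segment lies locally on one side of $\ell^*(w)$, or along a full segment of $B$ contained in $\ell^*(w)$. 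As $B$ is finite, only finitely many critical lines pass through $p$.

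For each critical line $\ell^*$ through $p$, parameterize nearby lines by $(\delta\theta, \delta r)$: a rotation about $p$ together with a perpendicular translation. The region $U_{\ell^*}$ of perturbations whose line misses $B$ is cut out by one half-plane condition per critical intersection of $\ell^*$ with $B$. Because $p \in R(B)$, $U_{\ell^*}$ is disjoint from the axis $\{\delta r = 0\}$ near the origin; as a convex intersection of half-planes it therefore lies strictly on one side of that axis. Translating back to point perturbations, the ``failure half'' $H_{\ell^*} \subseteq S^1$ of unit vectors $w$ for which points $p + \delta w$ admit missing lines approaching $\ell^*$ is an open half-circle of $S^1$ bounded by the two unit directions parallel to $\ell^*$.

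Letting $u_s$ denote the unit direction along $s$, the union $\bigcup_{\ell^*} H_{\ell^*}$ contains every $w \neq \pm u_s$, while omitting $\pm u_s$ itself (since $p \pm \delta u_s \in s \subseteq R(B)$). For an open half-circle to omit both points of an antipodal pair, both points must lie in the closed complementary half-circle; but the interior of that closed half-circle, itself an open half-circle, cannot contain two antipodal points. Hence $\pm u_s$ must lie on the boundary of each $H_{\ell^*}$, which is itself an antipodal pair and must therefore equal $\{\pm u_s\}$. So every critical line through $p$ is parallel to $s$ and, passing through $p \in \ell_s$, equals $\ell_s$. Thus $\ell_s$ is the unique critical line and $H_{\ell_s}$ a single open half-circle, covering only one side of $\ell_s$; directions $w$ on the other side are also not parallel to $s$ yet cannot produce a failure, a contradiction. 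Hence $p \in B$; by closedness of $B$ and since this holds throughout the relative interior of $s$, we get $s \subseteq B$.

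The main technical obstacle is verifying one-sidedness of $U_{\ell^*}$. This is immediate by convexity when $\ell^*$ contains no full segment of $B$; in the degenerate case that $\ell^*$ does contain such a segment, the resulting constraint is non-convex, but since $p$ lies off that segment, the excluded ``wedge'' of perturbations does not touch $\{\delta r = 0\}$ away from the origin, and the one-sided conclusion still follows from the half-plane constraints at the other critical endpoints of $\ell^*$.
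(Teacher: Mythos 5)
Your proof is sound, and it reaches the lemma by a genuinely different route than the paper. The paper argues pictorially: it notes that missing lines through clear points near $p$ must become ever more parallel to $S$, then infers the existence of a segment of $B$ parallel to $S$ and of ``gaps'' on both sides of $S$, and finally exhibits a line threading both gaps that crosses $S$ while missing $B$. You instead work essentially in the dual: you classify the finitely many critical limit lines through $p$, use the antipodal half-circle argument to force every critical line to be the line $\ell_s$ spanned by $s$, and then show that the nearby missing-line perturbations of $\ell_s$ form (locally) a convex cone cut out by sidedness constraints at contact vertices, which avoids the pencil of lines through $p$ and hence lies on one side of it; consequently one side of $s$ near $p$ would be entirely blocked, contradicting maximality. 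What this buys is precision exactly where the paper hand-waves: the paper's step ``there must exist some line segment $S'\in B$ parallel to $S$'' is neither justified nor needed, and your version explicitly isolates the only delicate configuration, namely contacts of $\ell_s$ with segments of $B$ collinear with it, which the paper ignores.

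Two spots are compressed but correct as indicated. First, in $(\delta\theta,\delta r)$ coordinates the constraints ``pass on a prescribed side of a contact vertex'' are half-planes only to first order; parameterizing nearby lines by slope and intercept makes them exactly linear with all boundaries through the origin, which is what makes the cone argument clean, and it also shows you only need $H_{\ell^*}$ to lie between the open and closed half-circles bounded by the directions parallel to $\ell^*$ rather than to equal the open one. Second, in the collinear case the reason one-sidedness ``still follows from the half-plane constraints at the other critical endpoints'' deserves one more sentence: since, as you observe, the crossing-exclusion wedge of a collinear segment meets the axis $\{\delta r=0\}$ only at the origin, and every line through $p$ meets $B$, the axis must already be excluded by the endpoint half-plane constraints alone (the collinear segments cannot block any slightly tilted line through $p$); those constraints form a convex cone, so one-sidedness follows. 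With that inference spelled out, your argument is complete, and arguably more rigorous than the paper's own proof.
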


\begin{proof}
Assume this is not the case. Then there is some line segment $S \in R(B)$ that is a maximal region, but is not in $B$. Therefore all lines that pass through a point $p$ in $S$ intersect $B$, and there exists an open ball $A$ of points around $p$ such that every point $q$ in $A$ that is not in $S$ has a line $\ell$ through it which does not intersect $B$.

Consider such a point $q$. The line $\ell$ through $q$ that does not intersect $B$ cannot intersect $S$, or else the points it intersects in $S$ are not actually in $R(B)$. We can select a point $q'$ such that it is arbitrarily close to $p$, and the line $\ell'$ must therefore become ever more parallel to the line $S$ lays on to avoid intersection. Therefore it must be the case that the line collinear with $S$ intersects $B$, but the line $\ell'$ that is parallel to $S$ and arbitrarily close to it does not. Therefore, there must exist some line segment $S' \in B$ that is parallel to $S$. Further, there must be some opaque forests to the left and right of $S$ that do not meet each other or $S'$, or else $\ell$ can pass through $S$. Therefore, there is a space for parallel lines to the left and right of $S$. However, this implies that there is a line $\ell''$ that enters through one space and exits through the other which does not intersect $B$ but passes through $S$, which means there are points in $S$ which are not in $R(B)$. If this were not the case, then $\ell'$ would intersect $B$. So we have a contradiction, therefore if $S$ is in $R(B)$, $S$ is in $B$.\end{proof}

\begin{figure}[ht]
  \centering
  \includegraphics[scale=0.4]{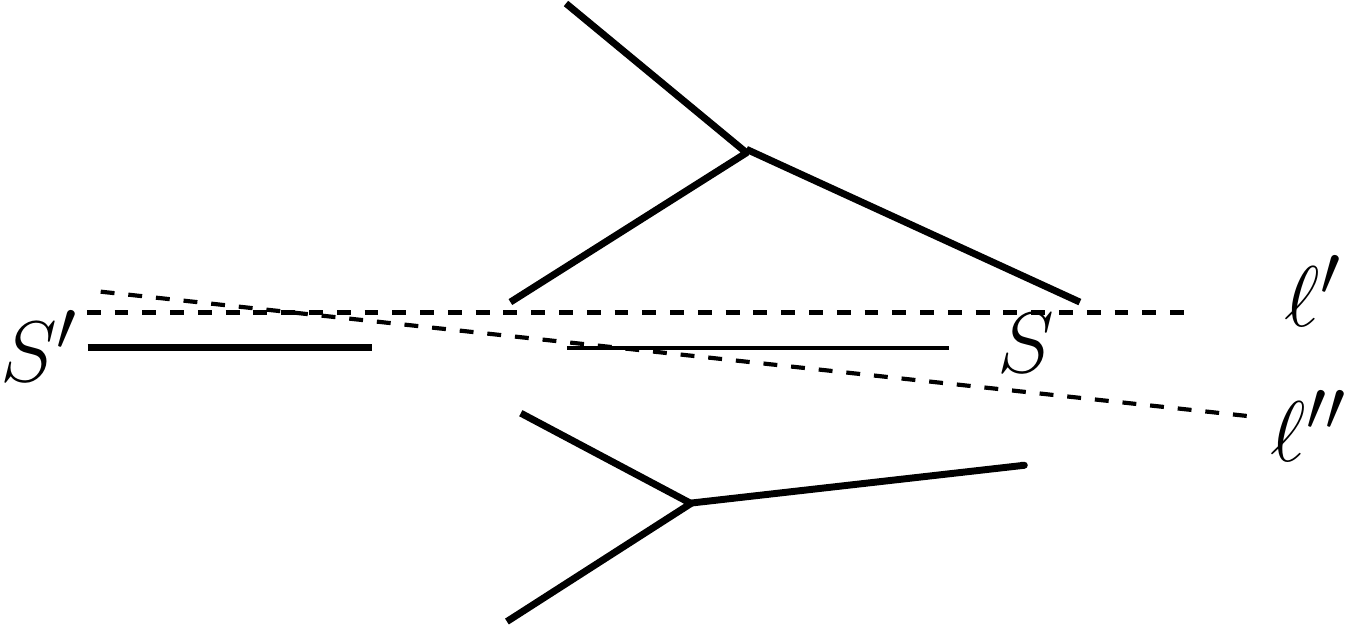}
  \caption{There is a line $\ell''$ which does not intersect $B$ but passes through $S$}
  \label{fig:noSegments}
\end{figure}

\begin{lemma}
\label{lemma:hasPoints}
$R(B)$ may contain maximal regions that are single points, but are not part of $B$.
\end{lemma}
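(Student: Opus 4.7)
My plan is to prove the statement by exhibiting a specific barrier $B$ and a point $p$ with $p \in R(B)$, $p \notin B$, and $\{p\}$ a maximal region of $R(B)$. I would build $B$ with three-fold rotational symmetry about $p$ so that the angular arcs of its segments, viewed from $p$, exactly tile the space $[0,\pi)$ of undirected line-directions; the goal is then to show that any perturbation of $p$ opens a gap somewhere in this tiling.

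Concretely, I would set $p$ to the origin, take $S_1$ to be the segment from $(1/\sqrt{3},1)$ to $(-1/\sqrt{3},1)$, and take $S_2, S_3$ to be its rotations by $2\pi/3$ and $4\pi/3$ about $p$. From $p$ each $S_i$ subtends an arc of length exactly $\pi/3$, and the three arcs abut at the meeting directions $0$, $\pi/3$, and $2\pi/3$, together covering $[0,\pi)$ once with no overlap. This makes $p \in R(B)$, and visibly $p \notin B$.

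The main step is to show that $p$ is isolated in $R(B)$. For a small perturbation $q = p + v$, I would differentiate $\arg(A - q)$ at $q = 0$: for an endpoint $A$, the first-order shift of its direction from $q$ is $(A_y v_x - A_x v_y)/|A|^2$. The crucial feature of the construction is that at every meeting direction the two endpoints responsible are diametrically opposite through $p$, so their infinitesimal shifts are equal in magnitude and opposite in sign. Hence at each meeting direction the two adjacent arcs either separate (opening a gap) or overlap, with the sign of the change determined by a fixed linear functional of $v$. This produces three half-planes in $v$-space, one per meeting direction, describing exactly the perturbations that open a gap at that meeting.

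The main obstacle is then to verify that these three half-planes cover $\mathds{R}^2 \setminus \{0\}$. By the three-fold symmetry the three half-plane normals are themselves $2\pi/3$ apart, and three such vectors have the property that any nonzero $v$ has strictly positive inner product with at least one of them; the intersection of the three opposite closed half-planes reduces to $\{0\}$. Consequently every nonzero $v$ opens a gap at some meeting direction, yielding a line through $q$ that misses $B$; so $q \notin R(B)$, and $\{p\}$ is a maximal region of $R(B)$ disjoint from $B$.
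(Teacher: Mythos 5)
Your proposal is correct and takes essentially the same route as the paper: both prove the lemma by exhibiting a three-segment barrier whose subtended arcs, seen from a point $p \notin B$, exactly tile the space of line directions, so that $p$ is blocked but every nearby point is clear. The only difference is that the paper simply points to its figure for the verification, whereas you give explicit coordinates (alternate sides of a regular hexagon) and an analytic isolation argument via the antipodal endpoint pairs; that argument is sound, noting only that the sign condition at each meeting direction is in fact the exact condition of lying on one side of the line through the two antipodal endpoints, which removes any worry about the first-order approximation and yields a uniform punctured ball of clear points.
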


\begin{proof}
Consider the construction of three line segments found in Figure 2. 

\begin{figure}[ht]
  \centering
  \includegraphics[scale=0.4]{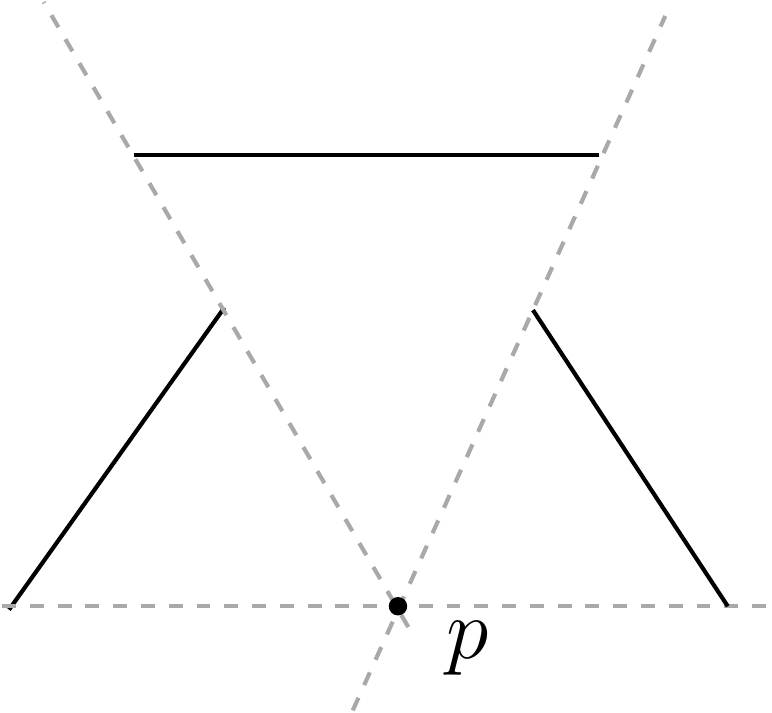}
  \caption{A construction that creates a maximal region that is exactly one point}
  \label{fig:pointGenerator}
\end{figure}

$p$ is not part of $B$. Every line that passes through $p$ intersects $B$, so $p \in R(B)$. Yet there exists an open ball of points centred at $p$ such that every point in this ball except for $p$ has a line through it that does not intersect $B$. Therefore $p$ is a maximal region of $R(B)$.\end{proof}


\section{Clear and Blocked Points}

Let a \emph{blocked point} be a point $p$ with respect to some barrier $B$ such that for every line $\ell$ which passes through $p$, $\ell$ intersects $B$. Then a \emph{clear point} is a point which is not blocked. Every point of $B$ is a blocked point. Moreover, $R(B)$ is the set of all blocked points with respect to $B$, and the complement $\overline{R(B)}$ or $R(B)$ is the set of all clear points. 

\begin{theorem}
\label{thm:tangency-boundary}
For every barrier $B$, each maximal region $C \subseteq R(B)$ is the intersection of halfplanes defined by lines that pass through two vertices of $B$.
\end{theorem}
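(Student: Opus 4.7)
The plan is to use a line-arrangement argument. Let $V$ denote the set of at most $2n$ endpoints of segments in $B$, let $\mathcal{L}$ be the $O(n^2)$ lines through distinct pairs of vertices in $V$, and let $\mathcal{A}$ be the arrangement $\mathcal{L}$ induces in $\mathds{R}^2$. The claim I would reduce to is that the indicator $[q \in R(B)]$ is constant on every open cell of $\mathcal{A}$; once this is established, the boundary of each maximal region $C$ is forced to lie on lines of $\mathcal{L}$, and $C$ is then the intersection of the halfplanes these lines determine that contain it.

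I would first recast blocking combinatorially. For any point $q$ and segment $s = uv \in B$, the set $I_s(q) \subseteq [0,\pi)$ of directions $\theta$ such that the line through $q$ at angle $\theta$ crosses $s$ is a closed arc whose endpoints are $\arg(\vec{qu}) \bmod \pi$ and $\arg(\vec{qv}) \bmod \pi$, and $q \in R(B)$ holds exactly when $\bigcup_{s \in B} I_s(q) = [0,\pi)$. Whether such a collection of arcs covers the full circle is a combinatorial property, determined by the cyclic order of the arc endpoints on $[0,\pi)$ together with the pairing of endpoints into arcs.

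Next I would verify that this combinatorial data is invariant on every open cell of $\mathcal{A}$. The direction $\alpha_u(q) := \arg(\vec{qu}) \bmod \pi$ depends continuously on $q$, and $\alpha_u(q) = \alpha_v(q)$ holds iff $q$, $u$, $v$ are collinear, i.e., iff $q \in \ell_{uv}$. Within an open cell no such coincidence occurs, so the cyclic order of $\{\alpha_u(q) : u \in V\}$ in $[0,\pi)$ is preserved; continuous variation of each arc $I_s(q)$ then forces it to deform without collapsing or swapping with the complementary arc between its two endpoints, so the pairing data is preserved as well. Coverage, and with it membership in $R(B)$, is therefore constant on every open cell, which confines the boundary of $R(B)$ to the union of lines in $\mathcal{L}$.

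The step I expect to be most delicate is ruling out a silent exchange between $I_s(q)$ and its complementary arc as $q$ moves within a cell: for example, when $q$ crosses the Thales circle of $u$ and $v$ the short-arc and long-arc labels swap, and one must verify that the actual separating arc varies continuously rather than discretely flipping. Tracking $I_s(q)$ directly from the separation condition settles the continuity, after which the combinatorial coverage lemma delivers the cell-wise invariance that proves the theorem.
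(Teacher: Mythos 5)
Your approach is genuinely different from the paper's (which perturbs a supporting line of $C$ --- translating it if it misses $B$, rotating it about a single tangency point --- to force tangency at two vertices), and the core of your argument is sound: within an open cell of the arrangement $\mathcal{A}$ of lines through pairs of vertices, no two directions $\alpha_u(q)$ coincide and $q$ avoids every supporting line of a segment, so each $I_s(q)$ stays a proper closed arc, the cyclic order and the arc/complementary-arc data cannot change without crossing a line of $\mathcal{L}$, and hence $[q \in R(B)]$ is constant on each cell. This correctly confines $\partial R(B)$ to $\bigcup \mathcal{L}$ and is, if anything, easier to make fully rigorous than the paper's perturbation argument.

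The gap is in your last inference: from ``$\partial C$ lies on lines of $\mathcal{L}$'' you conclude that $C$ is the intersection of the halfplanes bounded by those lines that contain it. That step fails without knowing $C$ is convex: cell-constancy only makes $C$ a union of closed cells of $\mathcal{A}$, and such a union can be, say, L-shaped, in which case it is not an intersection of halfplanes at all. The theorem's statement implicitly contains convexity (the paper even derives convexity as a remark from the theorem), so you must supply it. Fortunately there is a short argument: if $p, q$ lie in the same maximal region $C$ and some $x$ on the segment $pq$ were clear, there would be a line $\ell$ through $x$ missing $B$; any blocked point on $\ell$ would force $\ell$ to meet $B$, so $\ell$ misses $R(B)$ entirely, yet $\ell$ strictly separates $p$ from $q$ (it cannot contain them, as they are blocked), contradicting connectedness of $C$. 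Hence each maximal region is convex, and only then does your boundary statement yield that $C$, being a closed convex set whose boundary lies on finitely many lines of $\mathcal{L}$, equals the intersection of the halfplanes those lines define. With that addition your proof is complete and stands as a valid alternative to the paper's.
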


\begin{proof}

Assume there exists some line $\ell$ which is tangent to the boundary of a maximal region $C \subseteq R(B)$, but $\ell$ does not touch $B$. Then, because the complement of $B$ is an open set, $\ell$ can be translated to intersect $C$ without intersecting $B$. However that would mean $C$ contains clear points, which is a contradiction. Therefore, $\ell$ must be tangent to $B$ at at least one point. Now assume $\ell$ is tangent to $B$ at exactly one point. Then $\ell$ can still be rotated around the point of tangency, once more intersecting $C$. This once more contradicts the fact that $C$ is a subset of $R(B)$. Therefore $\ell$ must be tangent to at least two points of $B$. Further, since $B$ is a set of line segments, only the end points of these segments need be considered, as tangency to a line segment is simply tangency to its two end points.\end{proof}

\begin{figure}[ht]
  \centering
  \includegraphics[scale=0.4]{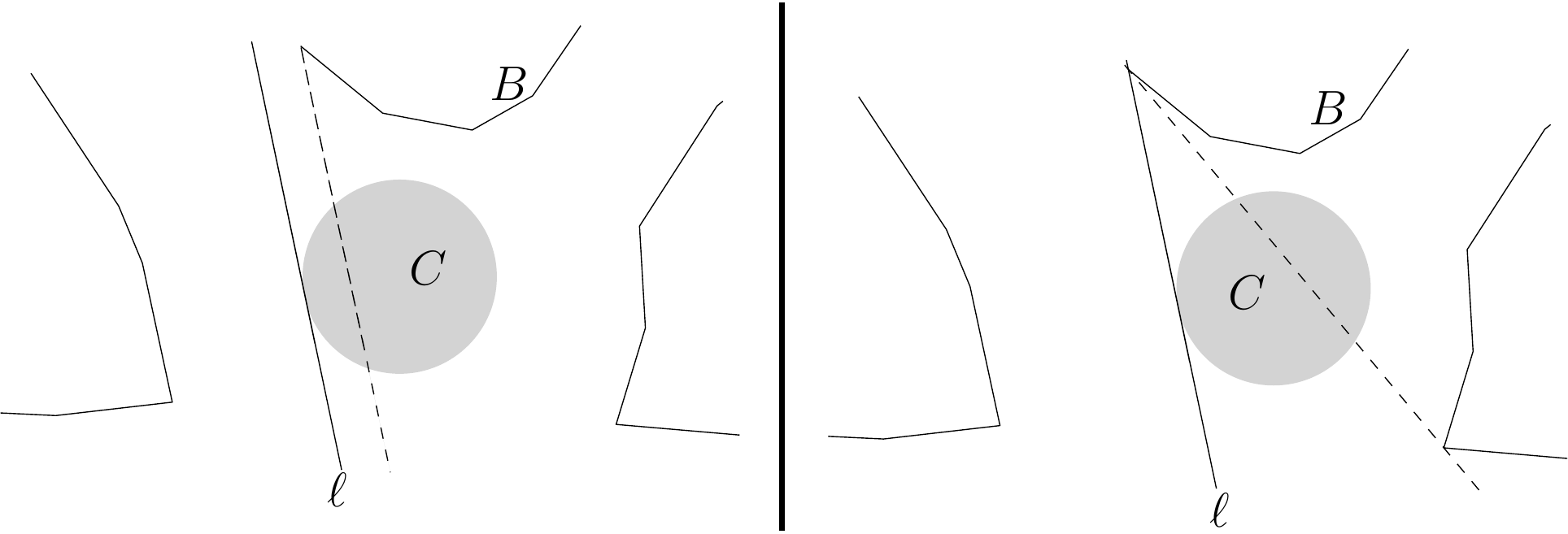}
  \caption{The line $\ell$ must be tangent to $B$ at two vertices, if it defines the boundary of part of $R(B)$}
  \label{fig:halfPlaneBounding}
\end{figure}

Remark that this also implies that we need only finitely many halfplanes to define a maximal region of $R(B)$, and that every maximal region of $R(B)$ is convex.


\section{Connected Components}

$B$ is a set of $n$ line segments consisting of $m$ connected components $B_1, \dots, B_m$. Further, $Conv(B_i)$ is the convex hull of the connected component $B_i$. Then for some point $p \in \mathds{R}^2$, we define $L_p(B_i)$ as follows:

\begin{enumerate}
\item If $B_i$ is a single line segment, and $p$ is collinear to $B_i$, then $L_p(B_i) = \emptyset$ 
\item Otherwise, if $p$ lies on a vertex of $Conv(B_i)$, then $L_p(B_i)$ is the double-wedge defined by the lines of the two edges of $Conv(B_i)$ that meet at $p$. 
\item Otherwise, if $p$ lies inside $Conv(B_i)$, or on its boundary, $\partial Conv(B_i)$, then $L_p(B_i) = \mathds{R}^2$
\item Otherwise, $L_p(B_i)$ is the double-wedge defined by the tangents of $Conv(B_i)$ that pass through $p$. 
\end{enumerate}

\begin{figure}[ht]
  \centering
  \includegraphics[scale=0.60]{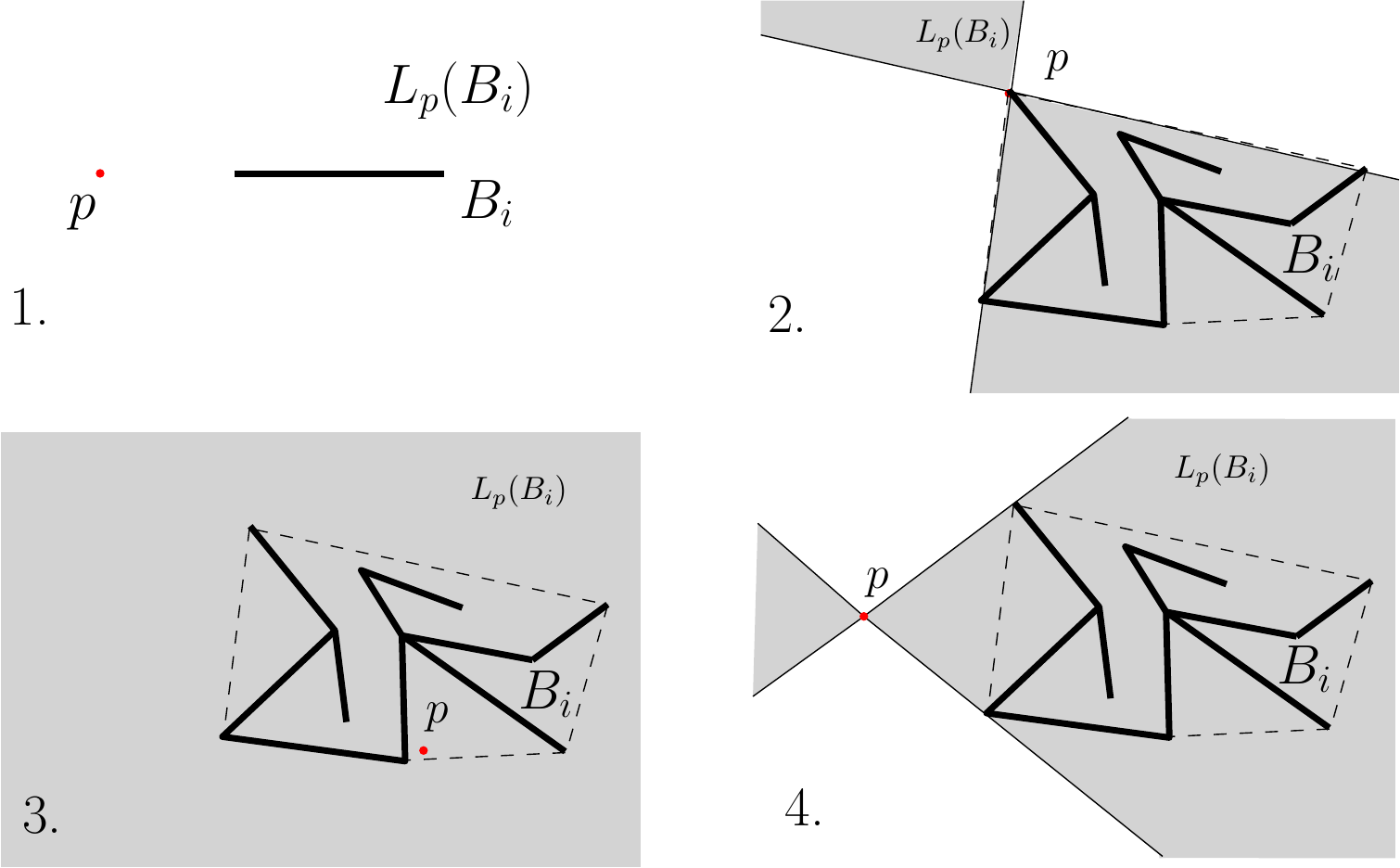}
  \caption{Various possible cases for $L_p(B_i)$}
  \label{fig:pointCoverage}
\end{figure}

Intuitively, $L_p(B_i)$ can be thought of as the set of all lines that pass through $p$ and intersect $B_i$. However this is not strictly true. For parts (3) and (4) of the definition, this does in fact hold. However, (2) describes the limiting behaviour of a point as it tends towards a vertex of $Conv(B_i)$ from outside. (1) ignores the behaviour of points collinear to a single disjoint line segment. This definition may seem counter-intuitive, but it is useful for us. Further, we will consider $L_p(B_i)$ to be a subset of $\mathds{R}^2$, and not the actual lines that pass through $p$. 

\begin{lemma}
Every point in $\overline{L_p(B_i) \cup B_i}$ is a clear point with respect to $B_i$.
\end{lemma}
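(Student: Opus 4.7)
My plan is to reduce the statement to a standard convex separation argument. Specifically, let $q \in \overline{L_p(B_i) \cup B_i}$. I will first establish that $q \notin Conv(B_i)$, and then deduce that there is a line through $q$ which avoids $Conv(B_i)$ and hence $B_i$, witnessing that $q$ is a clear point with respect to $B_i$.

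To show $q \notin Conv(B_i)$, I proceed by a case analysis matching the four-part definition of $L_p(B_i)$, with the goal of verifying the containment $Conv(B_i) \subseteq L_p(B_i) \cup B_i$ in each case. Case (3) is vacuous since $L_p(B_i) = \mathds{R}^2$ leaves no complement. Case (1) is immediate because $Conv(B_i) = B_i$. In case (4), where $p$ is strictly outside $Conv(B_i)$, $Conv(B_i)$ lies in the closed cone from $p$ bounded by its two tangent lines on the side of $p$ towards $Conv(B_i)$, and this cone is one half of the tangent double-wedge $L_p(B_i)$. In case (2), where $p$ is a vertex of $Conv(B_i)$, the two edges at $p$ are supporting edges of the convex polygon $Conv(B_i)$, so $Conv(B_i)$ lies in the tangent cone at $p$ (the intersection of the two inward half-planes), which is again one half of the double-wedge $L_p(B_i)$.

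Once $q \notin Conv(B_i)$ is established, I would finish using the separating hyperplane theorem applied to $q$ and the closed convex set $Conv(B_i)$: take a line strictly separating the two, then translate it parallel to itself until it passes through $q$. Because the translation moves away from $Conv(B_i)$, the translated line still has $Conv(B_i)$ entirely on one side, and hence does not meet $B_i \subseteq Conv(B_i)$. This is the desired witness line.

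The main obstacle is case (2): one has to fix the convention that $L_p(B_i)$ denotes the particular double-wedge whose ``positive'' side contains $Conv(B_i)$ locally at $p$, and then invoke the supporting-line property of the edges of a convex polygon to conclude the global inclusion $Conv(B_i) \subseteq L_p(B_i)$. Once that convention is pinned down, cases (1), (3), and (4) are either vacuous or follow directly from the definition of the tangent lines, and the separation step is routine.
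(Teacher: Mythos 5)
Your proposal is correct and follows essentially the same route as the paper: a case analysis on the four parts of the definition establishing $Conv(B_i) \subseteq L_p(B_i) \cup B_i$, combined with the fact that the blocked points with respect to a connected component are exactly its convex hull. The only difference is that you make the last fact explicit via a separating-line argument, which the paper simply asserts as $R(B_i) = Conv(B_i)$.
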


\begin{proof}
In case (1), where $B_i$ is a single line segment and $p$ is a point collinear with it, this follows trivially, as $R(B_i) = B_i$. Therefore, even though $\overline{L_p(B_i)} = \mathds{R}^2$, the only points that aren't clear are those of $B_i$ itself. 
In case 2, where $p$ is a vertex of $Conv(B_i)$, $Conv(B_i)$ is completely contained within $L_p(B_i)$. 
	Since $R(B_i) = Conv(B_i)$, $\overline{L_p(B_i) \cup B_i}$ can't 
	contain a blocked point. 
In case (3), where $p$ lies inside of or on $Conv(B_i)$ this also follows trivially, as $\overline{L_p(B_i)}$ is empty. 
In case (4), where $p$ lies outside of $Conv(B_i)$, $Conv(B_i)$ is once again completely contained within $L_p(B_i)$ and therefore, $\overline{L_p(B_i) \cup B_i}$ can't contain a blocked point. 
Therefore $\overline{L_p(B_i)}$ is a set of clear points.
\end{proof}

We now define $L_p(B) = \displaystyle\bigcup\limits_{i=1}^{m} L_p(B_i)$ to be the set of all lines which intersect $B$ and pass through $p$, ignoring previously established special cases.

\begin{figure}[ht]
  \centering
  \includegraphics[scale=0.6]{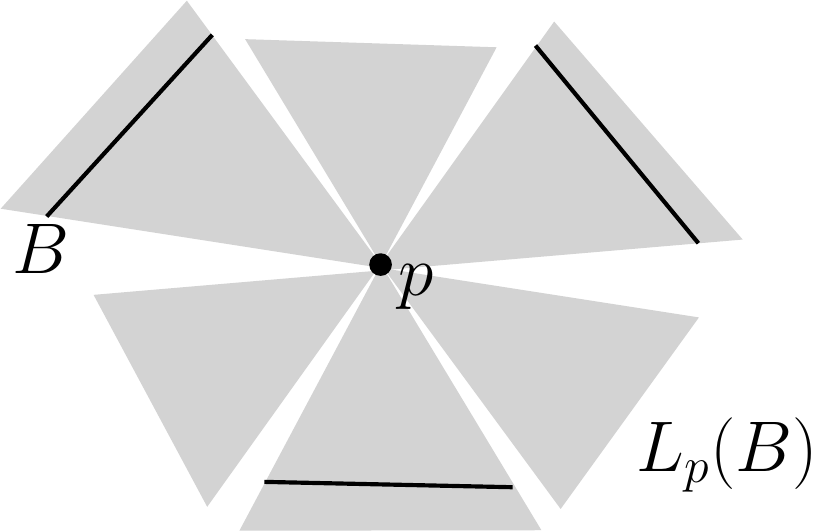}
  \caption{$L_p(B)$ is the set of all lines that intersect $B$ and pass through some point $p$}
  \label{fig:LofQ}
\end{figure} 

Since $\overline{L_p(B_i)\cup B_i}$ is a set of clear points with respect to $B_i$, we can further conclude that $\overline{L_p(B) \cup B}$ has this property with respect to the whole of $B$. Further, for some points $r$ and $s$, since $\overline{L_r(B) \cup B}$ and $\overline{L_s(B) \cup B}$ have this property, $\overline{L_r(B) \cup B} \cup \overline{L_s(B) \cup B}$ also has this property. By DeMorgan's law for set compliments, we can also conclude that $\overline{(L_r(B) \cap L_s(B)) \cup B}$ has this property as well. Therefore given \[L(B) = \displaystyle\bigcap\limits_{i=1}^{m} \ \ \displaystyle\bigcap\limits_{p\text{: vertex of }Conv(B_i)}L_p(B)\] we know $\overline{L(B)\cup B}$ is a set that also has this property. 

\begin{theorem}
\label{thm:coverageComputation}
Let $CI$ be the closure of the interior of a set of points, then $CI(L(B)) \cup B \subseteq R(B) \subseteq L(B) \cup B$. Further, $R(B)\setminus (CI(L(B)) \cup B)$ is a finite set of disjoint points.
\end{theorem}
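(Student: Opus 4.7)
My plan is to prove the theorem in three parts corresponding to its three assertions.

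The inclusion $R(B) \subseteq L(B) \cup B$ is immediate from the paragraph preceding the theorem, which shows that $\overline{L(B) \cup B}$ consists entirely of clear points; taking complements gives the containment.

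For $CI(L(B)) \cup B \subseteq R(B)$, the containment $B \subseteq R(B)$ is immediate from the definition of a barrier, so I focus on $CI(L(B)) \subseteq R(B)$. First I observe that the set of clear points is open: a clearing line $\ell$ through $q$ has positive distance from the closed set $B$, so parallel translates of $\ell$ through all nearby points still miss $B$. Hence $R(B)$ is closed, and it suffices to show $\mathrm{int}(L(B)) \subseteq R(B)$. To do so I argue by contradiction: suppose $q \in \mathrm{int}(L(B))$ is clear via a line $\ell$. Since each connected component $B_i$ is disjoint from $\ell$, compactness forces each $Conv(B_i)$ to lie strictly on one side of $\ell$. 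Rotate $\ell$ around $q$ until the rotated line $\ell'$ first becomes tangent to some $Conv(B_{i_0})$ at a vertex $p_0$, keeping the rotation small enough that $\ell'$ still misses every other $Conv(B_j)$. Since $\ell' = p_0 q$ is a supporting line of $Conv(B_{i_0})$ at $p_0$, the point $q$ lies in the pair of sectors at $p_0$ that are complementary to the double-wedge $L_{p_0}(B_{i_0})$; and since $\ell'$ misses every other convex hull, $q \notin L_{p_0}(B_j)$ for $j \neq i_0$. Hence $q \notin L_{p_0}(B) \supseteq L(B)$, contradicting $q \in L(B)$. The main obstacle I anticipate is the degenerate case where the first contact of $\ell'$ occurs along an entire edge of $Conv(B_{i_0})$ rather than at a single vertex, so that $q$ sits on the boundary of the double-wedge and the argument above stalls; I plan to eliminate this case by a small generic perturbation of $q$ within $\mathrm{int}(L(B))$ that makes the first tangency occur at a single vertex.

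For the finiteness of $R(B) \setminus (CI(L(B)) \cup B)$, I would classify the maximal regions of $R(B)$ by dimension. By Theorem \ref{thm:tangency-boundary} each such region is convex and hence has dimension $0$, $1$, or $2$. Lemma \ref{lemma:hasNoLines} places every $1$-dimensional maximal region inside $B$. For a $2$-dimensional maximal region $C$, each interior point $q$ either lies in $B$ or has a small open ball in $C \subseteq L(B) \cup B$ that can be shrunk away from the $1$-dimensional set $B$, giving $q \in \mathrm{int}(L(B))$; hence $\mathrm{int}(C) \subseteq \mathrm{int}(L(B)) \cup B$, and taking closures yields $C = \overline{\mathrm{int}(C)} \subseteq CI(L(B)) \cup B$. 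The only maximal regions still unaccounted for are the isolated points of Lemma \ref{lemma:hasPoints}; by Theorem \ref{thm:tangency-boundary} each is cut out by halfplanes whose bounding lines pass through pairs of vertices of $B$, and with only $2n$ such vertices there are at most $O(n^2)$ candidate lines and $O(n^4)$ pairwise intersection points, so only finitely many isolated points can arise, completing the proof.
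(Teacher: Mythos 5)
Your overall route is sound and, at its core, uses the same geometric idea as the paper: rotate a clearing line about the point in question until it first supports some $Conv(B_{i_0})$, and then contradict membership in $L_{p_0}(B)$ at the contact vertex. The packaging, however, is genuinely different and arguably cleaner. The paper proves $CI(R(B))=CI(L(B))$ by an argument about positive-area regions in the complements, whereas you first observe that the clear points form an open set (so $R(B)$ is closed), reduce the left inclusion to $\mathrm{int}(L(B))\subseteq R(B)$, and recover $CI(L(B))\subseteq R(B)$ by taking closures; this avoids the paper's somewhat delicate ``no zero-area region of $CI(L(B))$'' limit argument. Your finiteness argument is essentially the paper's (point-regions must sit at intersections of the $O(n^2)$ vertex-pair lines), with the added and useful observation that two-dimensional maximal regions, being closed and convex, equal the closure of their interiors and hence land inside $CI(L(B))\cup B$.

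The one step that does not hold as written is the claim that you can rotate ``keeping the rotation small enough that $\ell'$ still misses every other $Conv(B_j)$.'' The amount of rotation is dictated by the first contact, and at that moment $\ell'$ may simultaneously support a second hull $Conv(B_j)$ (or pass through a point of $\partial Conv(B_j)$, in which case $L_{p_0}(B_j)$ may even be all of $\mathds{R}^2$); then the line through $q$ and $p_0$ does meet $Conv(B_j)$, so $q\in L_{p_0}(B)$ and no contradiction results at $p_0$. This is precisely the case the paper spends its effort on (the ``$\ell'$ is tangent to $B_j$ as well'' branch, handled via the auxiliary point $p''$). Your proposed fix does work, but it must be stated more broadly than you do: the bad positions of $q$ are exactly those lying on one of the finitely many lines through two vertices of $B$ (this includes edge-extension lines and common tangents of two components), so a generic perturbation of $q$ inside the open set $\mathrm{int}(L(B))\cap\overline{R(B)}^{\,c}$ — not merely one avoiding first contact along an edge of $Conv(B_{i_0})$ — guarantees that the first-contact line touches $B$ at a single vertex of a single component, after which your contradiction goes through. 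With that enlargement of the genericity condition the proof is complete; without it, the degenerate tangency case is asserted away rather than handled.
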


\begin{proof}
Since $\overline{R(B)}$ is the set of all clear points with respect to $B$, and $\overline{L(B) \cup B}$ is a set of some clear points with respect to $B$, $\overline{R(B)} \supseteq \overline{L(B) \cup B}$. Therefore, $R(B) \subseteq L(B) \cup B$. 

From Lemmas \ref{lemma:hasNoLines} and \ref{lemma:hasPoints}, we know that the only zero area maximal regions of $R(B)$ that aren't in $B$ are individual points. Remark that $CI(L(B))$ differs from $L(B)$ in that only the zero area maximal regions of $L(B)$ have been removed. Therefore, if $CI(R(B)) = CI(L(B))$, $(CI(L(B)) \cup B) \subseteq R(B)$, and all that $R(B)$ and $CI(L(B))$ may differ by are disjoint points. Since $R(B) \subseteq L(B) \cup B$, and $B$ has zero area, $CI(R(B)) \subseteq CI(L(B))$, so all that remains to be proven is $CI(L(B)) \subseteq CI(R(B))$. Equivalently, $\overline{CI(R(B))} \subseteq \overline{CI(L(B))}$

Assume some postive-area region $P$ of points is in 
	$\overline{CI(R(B))}$. Consider a point $p \in P$. There 
	is some line $\ell$ through $p$ that does not intersect $B$. 
	Then $\ell$ can be rotated around $p$ without intersecting $B$ 
	until it is tangent with some connected component $B_i$ at some 
	point $p'$. We will call this rotated line $\ell'$. Now if 
	$p \notin \overline{CI(L(B))}$, then there exists some 
	$L_{p'}(B_j)$, $j\neq i$, which $p$ is in. This would mean 
	there is some line $\ell''$ through $p$ and $p'$ such that 
	$\ell''$ intersects $B_j$. 

\begin{figure}[ht]
  \centering
  \includegraphics[scale=0.4]{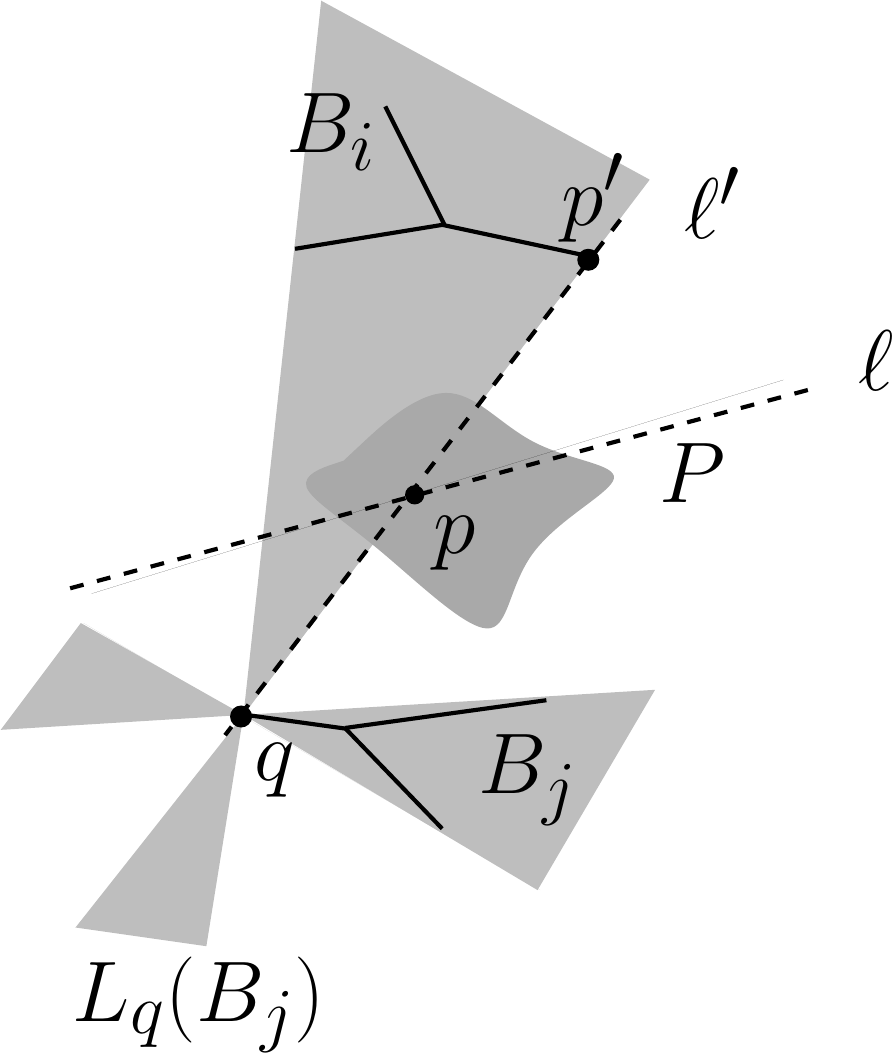}
  \caption{There exists some $L_{p'}(B_j)$, $j\neq i$, which $p$ is in}
  \label{fig:horribleMess}
\end{figure}

However $\ell''$ is $\ell'$, and if $\ell'$ intersects $B_j$ then there are three possibilities. Either $\ell$ intersects $B_j$, we should have stopped at $B_j$ before we got to $B_i$, or $\ell'$ is tangent to $B_j$ as well. For the first two cases we have a contradiction, so $\ell'$ must be tangent to $B_j$. However, since $p$ is part of some region with positive area, we may take a point $p'' \notin \ell'$ adjacent to $p$ such that it lies on no such tangent, and for which this case is therefore not possible. Therefore $p'' \in CI(L(B))$ or else there is a contradiction. Remark that this argument holds for any choice of $p''$ that does not lie on a tangent between two connected components. If the points on these tangents were not in $\overline{CI(L(B))}$ this would imply a region of zero area exists in $CI(L(B))$, but this is impossible. Therefore all points around $p$ must be in $\overline{CI(L(B))}$, and therefore $p$ must be as well. Therefore, if $p \in \overline{CI(R(B))}$, $p \in \overline{CI(L(B))}$, and therefore $CI(L(B)) \subseteq CI(R(B))$.

Since $CI(R(B)) = CI(L(B))$, $R(B)\setminus CI(L(B))$ is a set of disjoint points. To prove that there are finitely many points, recall that by Theorem \ref{thm:tangency-boundary} each maximal region of $R(B)$ is an intersection of halfplanes defined by the vertices of $B$. The only way to get a point from this process is where three or more halfplane boundaries intersect at a point. Since there are finitely many vertices and therefore finitely many halfplanes, it follows that there are finitely many points. 
\end{proof}

\section{Computing the Coverage of a Barrier}

Theorem \ref{thm:coverageComputation} provides a procedure for computing $R(B)$.

We will assume our input is given as a list $B$ of $m$ connected components $B_1, \dots, B_m$, totalling $n$ line segments. The first step of our algorithm will be to compute the convex hulls of all $m$ components. Next, for each vertex $p_k$ of each $Conv(B_i)$, we will compute $L_{p_k}(B_j)$ for each $Conv(B_j)$, and union together these $L_{p_k}(B_j)$ into $L_{p_k}(B)$ by sorting them by angle. Then we will construct an arrangement using all the lines of the $L_{p_k}(B)$. We can then determine our final result by determining how many $L_{p_k}(B)$ one cell is part of, and then traversing the dual while changing our count according to whether a given edge exits or enters an $L_{p_k}(B)$. Then we simply output those regions which were in every $L_{p_k}(B)$, as well as $B$ itself. However this process returns $CI(L(B)) \cup B$, so we may still be missing a finite number of points.

To compute these points, recall that they must lay at the intersection of 3 or more halfplanes. While this is necessary, it is not sufficient. The only way we know of to be certain a point is in $R(B)$ is to perform a radial plane sweep on $B$ from that point. Since there are $O(mn)$ lines in the arrangement, there are $O(m^2n^2)$ candidate points. We will consider a line $\ell$ that makes up the arrangement. There are $O(mn)$ points of intersection on this line. First we will perform a radial plane sweep on one of these points $p$ to construct a set $\Theta = \{\theta_1, \dots, \theta_k\}$ of points on the interval $0$ to $\pi$, where each point $\theta_i$ represents the angle of a tangent to some $B_j$ from $p$, and each point is labelled with the number of connected components the line through $p$ at the angle  $\theta_i + \epsilon$ intersects. If every $\theta_i$ is labelled with a non-zero value, then $p \in R(B)$ and we return it. Now consider the intersection point $q$ on $\ell$ that is adjacent to $p$. While most of the exact values of $\Theta$ will change, the ordering and labelling of the points will only change for those related to the tangents that bound this segment of $\ell$. We can store this data in the vertices of the arrangement during construction, so we can just query $p$ and $q$ for this information. By updating just these values and checking if any are now labelled with $0$, we now know if $q$ is in $R(B)$. Repeating this process for all the points on $\ell$, and then for all choices of $\ell$, we will have determined all the points in $R(B)$.

Computing the convex hulls will take $O(n \log n)$ time. Computing $L_{p_k}(B_j)$ requires computing two lines. In all the special cases this takes constant time, however in the case where we must actually compute the lines as tangents, we take $O(\log n)$ time to binary search $Conv(B_j)$'s vertices for the most extreme points. Since there are $O(m)$ $B_j$, it takes $O(m \log n)$ time to compute them all for one $p_k$. Further, to union them together into $L_{p_k}(B)$, we need to sort their lines by angle, which will take $O(m \log m)$ time. Since there are $O(n)$ $p_k$, we take $O(nm(\log n +\log m))$ time to compute them all. Since we now have $O(nm)$ lines from all our $L_{p_k}(B)$, our arrangement will take $O(m^2n^2)$ time to compute, whose dual we can navigate in $O(m^2n^2)$ time.\cite{Berg08}

For each line of the arrangement we take at most $O(m^2)$ time to perform the plane sweep of the first point. Then for each other point, there are an amortized $O(1)$  other lines intersecting at this point, and we do $O(1)$ work per intersection, so we do $O(mn)$ work per line. Therefore this step takes $O(m^2n^2)$ time. 

Therefore our algorithm runs in $O(m^2n^2)$ time. Now we must determine whether this is good or not.

Since $m$ is at most $n$, our algorithm will run in $O(n^4)$ time in the worst case. Consider the following barrier: Take a regular $n$-gon, and shrink all the edges by a small amount, so that there are gaps where the vertices were. Now there are small regions of space where lines can travel between  each pair of vertices. These regions are equivalent to the planar embedding of $K_n$. This partitions the space into $\Theta(n^4)$ convex regions \cite{Freeman76}. So to even \emph{write} the output it would take $\Omega(n^4)$ time and space. Therefore, our algorithm is indeed worst-case optimal.

\begin{figure}[ht]
  \centering
  \includegraphics[scale=0.4]{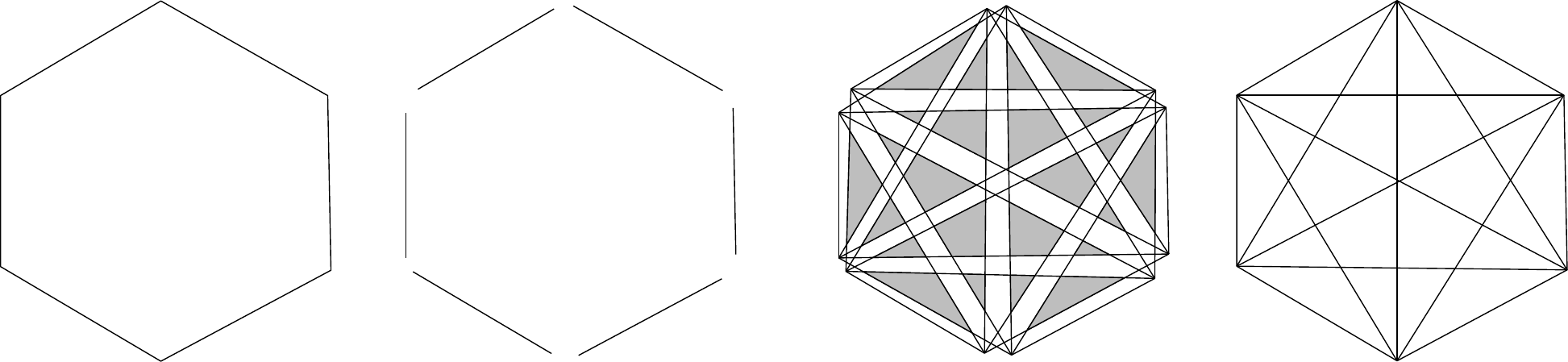}
  \caption{The worst known case barrier and its coverage}
  \label{fig:worstCase}
\end{figure}

\section{Deciding Whether a Point is Part of a Barrier's Coverage}

Given a barrier $B$ one can fairly simply determine whether a point $p$ is in $R(B)$ in $O(n\log n)$ time and $O(n)$ space using a plane sweep. However if $R(B)$ is already constructed, point queries can be done in $O(\log k)$ time using a structure that takes $O(k^2)$ extra space and $O(k^2 \log k)$ time to construct \cite{Kirkpatrick83}, where $k$ is the number of edges in $R(B)$.

\addcontentsline{toc}{chapter}{References}
\small 
\bibliographystyle{abbrv}
\bibliography{bibliography}

\end{document}